\let\chapter\section
\theoremstyle{theorem} 
\newtheorem{theorem}{Theorem}[section]
\newtheorem*{defi}{Definition}
\newtheorem{exam}[theorem]{Example}
\newcommand{\thistheoremname}{}
\newtheorem{genericthm}[theorem]{\thistheoremname}
\newcommand\se{\mathrm{SE}(3)}
\renewcommand\so{\mathrm{SO}(2)}
\newcommand\SO{\mathrm{SO}(3)}
\newcommand\Q{\mathbb{Q}}
  \renewcommand\C{\mathbb{C}} 
  \def\C{\ensuremath{\mathbb{C}}}
\newcommand\R{\mathbb{R}}
\newcommand\A{\mathbb{A}}
\newcommand{\N}{\mathbb{N}}
\renewcommand{\P}{\mathbb{P}}
\newcommand{\disc}{\operatorname{disc}}
\newcommand{\re}{\operatorname{Re}}
\newcommand{\im}{\operatorname{Im}}
\newcommand\ol[1]{\overline{#1}}
\newcommand\jc{{\bf jcapco: }}
\soulregister{\jc}{7}
\long\def\symbolfootnote[#1]#2{\begingroup%
\def\thefootnote{\fnsymbol{footnote}}\footnote[#1]{#2}\endgroup}
\def\blfootnote{\xdef\@thefnmark{}\@footnotetext}
 \def\@testdef #1#2#3{%
   \def\reserved@a{#3}\expandafter \ifx \csname #1@#2\endcsname
  \reserved@a  \else
 \typeout{^^Jlabel #2 changed:^^J%
 \meaning\reserved@a^^J%
 \expandafter\meaning\csname #1@#2\endcsname^^J}%
 \@tempswatrue \fi}
\title{Robots, computer algebra and \\eight connected components}
\author{Jose Capco\thanks{Innsbruck University, Innsbruck, Austria, email:{\text{jose.capco@uibk.ac.at.}}},  
        \and Mohab Safey El Din\thanks{Sorbonne Universit\'e, CNRS,
    LIP6, Paris, France, email:{\text{Mohab.Safey@lip6.fr.}}}, 
  \and
  Josef Schicho\thanks{JKU University, Linz, Austria, email:{\text{Josef.Schicho@risc.jku.at}}}
}
\let\mytitle\@title
\newtheorem{definition}{Definition}
\numberwithin{definition}{section}
\newtheorem{proposition}[definition]{Proposition}
\newtheorem{lemma}[definition]{Lemma}
\begin{document}

\maketitle

\begin{abstract}
  Answering connectivity queries in semi-algebraic sets is a long-standing and
  challenging computational issue with applications in robotics, in particular
  for the analysis of kinematic singularities. One task there is to compute the
  number of connected components of the complementary of the singularities of
  the kinematic map. Another task is to design a continuous path joining two
  given points lying in the same connected component of such a set. In this
  paper, we push forward the current capabilities of computer algebra to obtain
  computer-aided proofs of the analysis of the kinematic singularities of
  various robots used in industry.

  We first show how to combine mathematical reasoning with easy symbolic
  computations to study the kinematic singularities of an infinite family
  (depending on paramaters) modelled by the UR-series produced by the company
  ``Universal Robots''. Next, we compute roadmaps (which are curves used to
  answer connectivity queries) for this family of robots. We design an algorithm
  for ``solving'' positive dimensional polynomial system depending on
  parameters. The meaning of solving here means partitioning the parameter's
  space into semi-algebraic components over which the number of connected
  components of the semi-algebraic set defined by the input system is invariant.
  Practical experiments confirm our computer-aided proof and show that such an
  algorithm can already be used to analyze the kinematic
  singularities of the UR-series family. The number of connected components of
  the complementary of the kinematic singularities of generic robots in this
  family is $8$.
\end{abstract}

\section{Introduction}

The individual parts of a serial robot, called links, are moved by controlling
the angle of each joint connecting two links. The inverse kinematics problem
asks for the values of all angles producing a desired position of the end
effector, where ``position'' includes not just the location of the end effector
in 3-space but also the orientation. In a sense, this means inverting a function
which is called the forward kinematics map in robotics, which determines the
position of the end effector for given angles by a well-known formula (see
Section~\ref{sec:formulation}). In robot controlling, the inverse kinematics
problem is often solved incrementally: starting from some known initial angle
configuration and its corresponding end effector position, we want to compute
the change of the angles required to achieve a desired small change in the end
effector position.

Kinematic singularities are defined as critical points of the forward map, i.e. angle configurations where the Jacobian matrix
of the forward map is rank deficient. There are two known facts that make rather difficulty to control a robot in a singular
or near a singular configuration. First, if an end effector velocity or force outside the image of the singular Jacobian is desired, 
then the necessary joint velocity or torque is either not defined or very large (see \cite{mls} \S4.3 and \cite{spong} \S5.9). 
The second reason is that industrial controllers
are based on Newton's method for the incremental solution of the inverse problem, and this method is not guaranteed to converge
if it is used with a starting point close to the singular set.
For these reasons, engineers prefer to plan the robot movements avoiding kinematic singularities.

For a general serial robot with six joints, the singular set is a hypersurface
defined locally by the Jacobian determinant of the forward map. Its complementary,
a real manifold, is not connected. Counting the number of connected components
of this manifold and answering connectivity queries in this set is then of
crucial importance in this application domain.

Answering connectivity queries in semi-algebraic sets is a classical problem of
algorithmic semi-algebraic geometry which has attracted a lot of attention
through the development of the so-called ROADMAP algorithms (see e.g.
\cite{CannySAS, Canny88, BRSS, BPR00, BR, SaSc11, SaSc17, GR}). Up to our
knowledge, such algorithms had never been developed enough and implemented
efficiently to tackle real-life applications. 

\smallskip
In this paper, we push forward the capabilities of computer algebra in this
application domain by solving  connectivity queries for the non singular configuration sets of industrial robots from the UR series
of the company ``Universal Robots''. For a particular robot in this series, the UR5, the number of components of the non singular
configuration set is 8 (see Section \ref{sec:ur5}). For two points in the same component, we show how to construct a connecting path, in
two ways: either by an ad hoc way (which has its own algorithmic interest) taking advantage of the specialty of the geometric parameters of UR5, and by using the
ROADMAP algorithm (see Section \ref{sec:roadmap}).  Next, we go further and
extend our analysis of UR5 to the whole  UR series and prove that outside a
proper Zariski closed set (UR5 is outside this closed set) the number of
connected components of the non singular configuration set is constant. These
are computer-aided mathematical proofs involving <<easy>> symbolic computations 

The next contribution is based on the fact that the family of UR robots is
determined by a finite list of real parameters. Hence, an algorithmic way of
tackling the problem of analyzing kinematic singularities of the whole UR family
is to <<solve>> a {\em positive dimensional} system {\em depending on
  parameters} (i.e. after specialization of the parameters, the specialized
system is positive dimensional).   
We design an algorithm that decomposes the parameter's space into semi-algebraic subsets, such that the number of connected components of the non singular configurations
is constant in each of these subsets. As far as we know, this is the first
algorithm of that type which is designed.

We also implemented this algorithm and used it for the analysis of the kinematic
singularities of the UR series. Computations are heavy but already doable (on a
standard laptop) within $10$ hours. This is a computational way to retrieve the
same results as our computer-aided mathematical proofs. These computations show
that computer algebra today is efficient enough to solve connectivity queries
that are of practical interest in industrial robot applications.


\section{Robotics problem formulation}\label{sec:formulation}

We define a \emph{manipulator} or \emph{robot} as follows: we have finite
ordered rigid bodies called \emph{links} which are connected by $n$ revolute \emph{joints} that are also ordered. To each joint we 
associate a coordinate system or a \emph{frame}. The links are connected in 
a serial manner i.e. if we consider the robot as a graph such that the vertices are joints and the edges are links then this graph is a path (the first and 
last joint has degree 1 and all other joints have degree 2) and the joints allow rotation about its axes, so that if a joint rotates then all other 
subsequent links rotate about the axes of this joint. A reference coordinate system is chosen for the final joint which is called the 
\emph{end-effector}\footnote{this is usually another frame, but this is just an additional fixed transformation in $\se$ and w.l.o.g. we assume that the final offset, distance and twist is $0$}. 


In theoretical kinematics one may forget that the links are rigid bodies so that
\emph{collision} between links are disregarded. In this case we may as well
think of a robot as a differentiable map $F:\so^n \longrightarrow \se $ where
$\so$ is the one-dimensional group of rotations around a fixed line,
parameterised by the rotation angle, and $\se$ is the six-dimensional group of
Euclidean congruence transformations. This map is defined in the following way:
\begin{itemize}
\item The $i$-th coordinate of an element in $\so^n$ is associated to the $i$-th (revolute) joint parameter. 
\item For joint values $\vec \theta := (\theta_1,\dots,\theta_n)$ in $\so^n$,
  the image $F(\vec \theta)$ is the transformation of the end-effector from the
  initial position corresponding to all angles being zero to the final position
  obtained by composing the $n$ rotations.
\end{itemize}
The map $F$ itself is called the \emph{kinematic map} (of the robot). Its domain
is called the \emph{configuration space}, while its image is called the
\emph{work-space} or the \emph{kinematic image}.

We use the Denavit-Hartenberg (DH) convention when describing relations between
two joint frames. It is standard in robotics ; its advantages are discussed in 
e.g.\ \cite[\S3.2]{spong}, \cite[\S4.2]{angeles}. 
The transformation between the frames is given by the
following rule:
\begin{itemize}
\item The $z$-axis of the reference frame will be the axis of rotation of the joint.
\item To obtain the next frame, one starts with a rotation about the $z$-axis of the reference frame, called the \emph{rotation}, followed by
\item a translation along the $z$-axis of the reference frame, called the \emph{offset}, followed by
\item a translation along the $x$-axis, called the \emph{distance}, followed by 
\item a rotation about the $x$-axis, called the \emph{twist}.
\end{itemize}
The transformation between frame $i$ to frame $i+1$ is 
$$R_z(\theta_i)T_z(d_i)T_x(a_i)R_x(\alpha_i)$$
where $R_z,T_z,T_x,R_x$ are rotations or translations with respect to 
$z$- or $x$-axis parameterised by the angle of rotation $\theta_i$ (the $i$-th
joint parameter), the offset $d_i$, the distance $a_i$ and the angle of twist $\alpha_i$ of 
the $i$-th frame. For a given robot with $n$ joints all DH parameters except for the rotation are fixed values. So that image of $F$ for given 
joint values (the rotations) $(\theta_1,\dots,\theta_n)$ is just the multiplication of these 
transformations in $\se$. The parameters $d_1,d_n,a_n,\alpha_n$ are assumed to be $0$. This is not a loss of generality,
because we can freely choose the frame at the base and at the end-effector.
More detailed discussion on these can be seen in \cite{spong}.

\begin{exam}\label{ur5_example}
The UR5 robot has the following DH parameters:\\
{\em distances (m.)} $(a_1,\dots,a_6):=
(0,-\frac{425}{1000},-\frac{39225}{100000},0,0,0)$\\
{\em offsets (m.)}
$(d_1,\dots,d_6) :=(0,0,0,\frac{10915}{100000},\frac{9465}{100000},0)$\\
{\em twist angles (rad.)}
$(\alpha_1,\dots, \alpha_6):=(\frac{\pi}{2},0,0,\frac{\pi}{2},-\frac{\pi}{2},0)$

\noindent For example, the following joint angles (rotations, in rad.) 
$$(\theta_1,\dots,\theta_6):=\left(\frac{1}{10},\frac{2}{10},\frac{3}{10},\frac{4}{10},\frac{5}{10},\frac{6}{10}\right)$$
leads to the following transformation in $(R,\vec t) \in \se$ (represented as elements in $\SO\rtimes \R^3$) where:
{\small
\begin{align*}
R\simeq
\begin{pmatrix}
 0.047 & -0.977 & -0.209\\
-0.393 &  0.174 & -0.903\\
 0.918 &  0.123 & -0.376
\end{pmatrix},\ 
 \vec t \simeq (-6.768, -1.7784, -3.336). 
\end{align*}}
\end{exam} 

\begin{defi}
Given the kinematic map of a manipulator $F:\so^n \rightarrow \se$, the \emph{kinematic singularities} in the configuration space are the points $P\in (\P^1)^n$ such that
the Jacobian of $F$ at $P$ is rank-deficient.
\end{defi}

In this paper, we will only deal with 6-jointed manipulators. Therefore the
kinematic map is a differentiable map from the 6-dimensional configuration space
$(\so)^6$ to the group $\se$, which is also $6$-dimensional. For non-singular
points of the map, the Jacobian is therefore invertible, and $F$ is a local
homeomorphism. Here is a well-known geometric description of singularities.

\begin{theorem}\label{plueck_thm}
Let $F:\so^6 \rightarrow \se$ be the kinematic map of a robot with 6 joints. Let $P\in \so^6$. Then the following are equivalent.
\begin{enumerate}
\item $P$ is a kinematic singularity.
\item The Jacobian of $F$ at $P$ is singular.
\item If $P_1,\dots, P_6 \in \P^5(\R)$ are the Pl\"ucker representation of the
  axes (lines in $\P^3$) of the joints of the robot at the configuration point $P$ then the matrix consisting 
of the Pl\"ucker coordinates $(p_{i,j})_{i,j\le 6}$ ($P_i = (p_{i,1}:p_{i,2}:\cdots: p_{i,6})$ for $i=1,\dots,6$) is singular.
\end{enumerate}
\end{theorem}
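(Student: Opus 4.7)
The equivalence \textit{(1)}$\Leftrightarrow$\textit{(2)} is immediate from the definition of kinematic singularity as a critical point of $F$. The substantive content is \textit{(2)}$\Leftrightarrow$\textit{(3)}, and the plan is to write out the six columns of the Jacobian explicitly, identify each with an element of the Lie algebra $\mathfrak{se}(3)$ under right (or left) translation, and recognise that element as the Plücker coordinate vector of the corresponding joint axis in the current configuration $P$.

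Concretely, I would first set $A_i(\theta_i) := R_z(\theta_i)T_z(d_i)T_x(a_i)R_x(\alpha_i)$ so that $F(\vec\theta) = A_1(\theta_1)\cdots A_6(\theta_6)$. Differentiating with respect to $\theta_i$ and using the product rule gives
\[
\frac{\partial F}{\partial \theta_i}(\vec\theta) \;=\; A_1\cdots A_{i-1}\,\frac{\partial A_i}{\partial \theta_i}\,A_{i+1}\cdots A_6 \;=\; A_1\cdots A_{i-1}R_z(\theta_i)\,\xi_z\,T_z(d_i)T_x(a_i)R_x(\alpha_i)\,A_{i+1}\cdots A_6,
\]
where $\xi_z\in\mathfrak{se}(3)$ is the infinitesimal generator of rotation about the $z$-axis. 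Translating this tangent vector back to the identity by right multiplication by $F(\vec\theta)^{-1}$ produces $\mathrm{Ad}_{B_i}(\xi_z)$ with $B_i := A_1\cdots A_{i-1}R_z(\theta_i)$. Since $\mathrm{Ad}_{B_i}$ acts on $\mathfrak{se}(3)$ as the screw transport, $\mathrm{Ad}_{B_i}(\xi_z)$ is the infinitesimal rotation about the image line $B_i\cdot(z\text{-axis})$, which by the DH convention is precisely the $i$-th joint axis in the configuration $P$.

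Next I would use the standard identification $\mathfrak{se}(3)\cong\R^6$ that sends the infinitesimal rotation about a line $\ell\subset\R^3$ to its Plücker coordinate vector (direction paired with moment about the origin); this identification is a linear isomorphism, and the $6\times 6$ matrix of the Jacobian (in suitable bases) has as its $i$-th column exactly the Plücker vector of the $i$-th joint axis at $P$, up to a nonzero rescaling that is irrelevant to singularity. Thus the Jacobian is singular if and only if the matrix of joint-axis Plücker coordinates is singular, giving \textit{(2)}$\Leftrightarrow$\textit{(3)}.

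The routine but non-obvious step is keeping track of the chosen trivialisation of $T\se$ and making sure the rescalings of the projective Plücker representatives do not affect the rank statement. The genuinely geometric step, which is the heart of the proof, is identifying $\mathrm{Ad}_{B_i}(\xi_z)$ with the Plücker vector of the current $i$-th axis; once that is in place, the conclusion is pure linear algebra.
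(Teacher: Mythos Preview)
Your argument is correct and is precisely the standard proof one finds in the references the paper cites (Selig \S4.5.1, Murray--Li--Sastry \S4.1, Angeles \S4.5.1): identify the columns of the right-trivialised Jacobian with the adjoint-transported infinitesimal $z$-rotations, which are the Pl\"ucker vectors of the current joint axes. The paper itself does not spell out this computation at all; its proof consists of the remark that (1)$\Leftrightarrow$(2) is definitional and a bare citation for (1)$\Leftrightarrow$(3), so you have supplied considerably more detail than the paper does while following the same underlying route.
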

\begin{proof}
  The equivalence of the first two items is clear by definition. The equivalence
  of the first and the third item is found in \cite[\S 4.5.1]{selig}, \cite[\S
  4.1]{mls} or \cite[\S 4.5.1.]{angeles}
\end{proof}
Assume that we have two non-singular points in the configuration set. As
explained earlier, we want to decide if these two configurations can be
connected by a curve of configurations which avoids the singular
hypersurface (see \cite{wenger} \S1.2 for some history on this question). If
yes, then an explicit construction of such a curve is also of interest. In order
to tackle these problems, we choose
parameters for $\so$ so that the equation of the hypersurface becomes a
polynomial. This is not the case when we use the angles
$\theta_1,\dots,\theta_n$, because the Jacobian contains trigonometric functions
in these angles. One well-known strategy is to parametrize by points on a unit
circle, i.e. by two parameters satisfying the equation of the unit circle. This
has a clear disadvantage: the number of variables increases, and the singular
set has co-dimension greater than one. Another well-known strategy is to replace
$\theta_i$ by $v_i=\tan\frac{\theta_i}{2}$. The variable $v_i$ ranges over
the projective line, and the angle $\pi$ corresponds to the point at infinity.
If we set $v_i=\tan\frac{\theta_i}{2}$ for $i=1,\dots,n$, then we obtain, in
general, a polynomial in $v_2,\dots,v_5$. More precisely, the degree is 2 in
$v_2$ and $v_5$ and degree 4 in $v_3$ and in $v_4$. The Jacobian does not depend
on the joint angles $\theta_1$ and $\theta_6$. This is clear from the third
characterization of singularities in Theorem~\ref{plueck_thm}: only the position
of the axes are relevant, and a rotation along the first or the last axis does
not change the position of any axis.

We define the \emph{UR Family} to be robots having a similar DH-parameter as
the known UR robots (UR5, UR10, etc.). Such  \emph{UR robots} are parameterised
by the following DH parameters \\
distances (m.)\hfill
$(a_1,\dots,a_6):= (0,a_2,a_3,0,0,0)$\\
offsets (m.)\hfill
$(d_1,\dots,d_6) :=(0,0,0,d_4,d_5,0)$\\
twist angles (rad.)\hfill $(\alpha_1,\dots,
\alpha_6):=(\frac{\pi}{2},0,0,\frac{\pi}{2},-\frac{\pi}{2},0)$

For these robots, the determinant of the Jacobian (see \cite{husty}),
expressed as a polynomial in $v_2,\dots,v_5$, is $A = -Bv_3v_5$ with 

{\small  \begin{dmath*} \label{eq:A}
B=a_2v_2^2v_3^2v_4^2-a_3v_2^2v_3^2v_4^2-2d_5v_2^2v_3^2v_4-2d_5v_2^2v_3v_4^2-2d_5v_2v_3^2v_4^2+a_2v_2^2v_3^2+a_2v_2^2v_4^2-a_2v_3^2v_4^2-a_3v_2^2v_3^2+a_3v_2^2v_4^2+4a_3v_2v_3v_4^2+a_3v_3^2v_4^2+2d_5v_2^2v_3+2d_5v_2^2v_4+2d_5v_2v_3^2+8d_5v_2v_3v_4+2d_5v_2v_4^2+2d_5v_3^2v_4+2d_5v_3v_4^2+a_2v_2^2-a_2v_3^2-a_2v_4^2+a_3v_2^2+4a_3v_2v_3+a_3v_3^2-a_3v_4^2-2d_5v_2-2d_5v_3-2d_5v_4-a_2-a_3
\end{dmath*}
} 

Note that there is a degree drop in three of the four cases: the degree in $v_3$ is
only $3$, and not $4$, and the degree in $v_4$ is only $2$, and not $4$, and the degree in $v_5$ is only
$1$, and not $2$.
The drop in the degree means that the homogeneous form of the Jacobian has a linear factor that
vanishes if and only if the value of the variable whose degree drops is infinity, or equivalently,
that the corresponding angle is $\pi$. 
Since we are interested in the complement of the singular space, we may assume
that none of these three angles is equal to $\pi$, and we can use the parameters
$v_3,v_4,v_5$ without worrying about paths crossing infinity. 

For the angle $\theta_2$, the situation is different. There is no degree drop,
hence there are configurations with $\theta_2=\pi$ in the non singular
configuration set. If we use the parametrization by half angle, then we have to
take paths in the projective line into account that cross infinity, or, in other
words, consider this variable in the projective space $\mathbb{P}^1(\R)$. But,
to take advantage on algorithms acting on semi-algebraic sets, one needs
variables that range over $\R$.

Hence, we instead use parameters $s_2=\sin(\theta_2)$ and $c_2=\cos(\theta_2)$
and add the additional equation $s_2^2+c_2^2=1$, obtaining a polynomial in the
variables $s_2,c_2,v_3,v_4,v_5$ with coefficients depending on the parameters
$a_2,a_3,d_4,d_5$. Of course, the reason for this more costly treatment for
$\theta_2$ is just necessary if we use the ROADMAP algorithm subsequently. For
an alternative analysis not using it, it is still better to use the
half tangent ranging over the projective line.



\section{Analysis of the UR5 Robot}\label{sec:ur5}

We gave in Example \ref{ur5_example} the Denavit-Hartenberg parameters of the
UR5 robot.
These values are used to instantiate $a_2, a_3$ and $d_5$ in the above
polynomial $B$; the specialized polynomial is then denoted by $\tilde{B}$ and we
let $\tilde{A} = \tilde{B}v_3v_5$.
Recall that $v_2$ ranges over $\P^1$, while $v_3,v_4,v_5$ range only over the affine line.

We investigate the discriminant of $\tilde{B}$ with respect to the variable $v_2$ (thus
the projection of the critical set to the $(v_3,v_4)$-plane). The discriminant
of $\tilde{B}$ with respect to the variable $v_2$ we denote as $b\in \R[v_3,v_4]$. This
discriminant is still factorisable in $\C[v_3,v_4]$. In fact, one checks, that
it is the factor of two complex conjugates of some polynomial in $\R[v_3,v_4]$.
This implies that $b=c^2+d^2$ is the sum of two squares of real polynomials
$c,d\in \R[v_3,v_4]$. These two polynomials are given by
{\small\begin{align*}
c&=\frac{1577212v_3-3561263v_4-14850585}{\sqrt{2006237}}\\
d&=\frac{(\sqrt{2006237}v_4+1239915-7144712)v_3+16090500v_4}{\sqrt{2006237}}
\end{align*}}
Thus, $b$ can have only two real roots (i.e. two pairs $(v_3,v_4)$), i.e the vanishing set of $b$ in $\R^2$ is finite, namely they points that are the zeros of both $c$ and $d$. 
We solve this as 
floating numbers to have an idea of their vicinity in an affine chart of the ambient space of the kinematic singularity. The roots are 
$$
\begin{array}{llll}
q_1&=(v_3 \simeq -9.140975564&,& v_4 \simeq -8.218388067) \\ 
q_2&=(v_3 \simeq 9.140975563&,& v_4 \simeq -.1216783622)
\end{array}
$$

For the two special values $q_1$ and $q_2$ in the $(v3,v4)$-plane, all three coefficients of $\tilde{B}$ with respect to $v_2$ are zero.

Now, since the discriminant $b$ is positive except at these two points and since $\tilde{B}$ itself is quadratic with respect to $v_2$ we conclude that the preimage of the projection (to 
$(v_3,v_4)$-plane) are two real points in the variety defined by $\tilde{B}$. Thus, the variety defined by $\tilde{B}$ is composed of two sheets (above any two points $(v_3,v_4)$ 
except $q_1$ and $q_2$). Let $X$ be the complement of the vanishing points of
$\tilde{B}$ in $\P^1\times \A^2$.  Set
$$Y:=\A^2\setminus (\{q_1,q_2\}\cup\{ (0,v_4) \mid v_4\in\R\})$$
So we have a canonical projection (to the $(v_3,v_4)$-plane) from $X\cap (\P^1\times Y)$ to $Y$. The fiber of this projection is a projective line without two distinct points. 
Hence, every fiber has two components. The sign of $\tilde{B}$ is different for the two components of each fiber. Then, we have two components of $X$ for each component of $Y$. 
Obviously, $Y$ has two components, hence we have a total number of $4$ components. 

For the non singular set, which is the complement of the zero set of $\tilde{A}$, we get $8$ components: for each component of $X$, we have one component where $v_5$ is positive and one 
where $v_5$ is negative. 

Now assume that we have two non singular configuration points $x,y$ in the same component, and we want to construct a path connecting them. The projections to $Y$ have to lie in 
the same component of $Y$, and because $Y$ is the plane without a line and two points, it is easy to connect the images of the projections in $Y$: in most cases, a straight line 
segment is fine; if the straight line segment connecting the two image points contains $q_1$ or $q_2$, we have to do a random detour via a third point. The zero set of $\tilde{B}$ is a 
two-sheeted covering of $Y$. So, for any value of $Y$, we have two points in the zero set of $\tilde{B}$ projecting to it. If we look at these points as points in $\so$, then it is clear 
that there are 
two ``midpoints'' in the zero set of $\tilde{B}$, which have equal angle distance to these two points. The value of $\tilde{B}$ is positive for one of the two midpoints and negative for the 
other one. The 
sign of $\tilde{B}(x)$ and $\tilde{B}(y)$, however, must be the same because the two points are in the same connected component. Suppose, without loss of generality, that $\tilde{B}(x)$ and $\tilde{B}(y)$ are 
both 
positive. Then we first connect $x$ to the midpoint over the projection of $x$ to $Y$ with positive sign, by a curve in the fiber. Next, we lift the path in $Y$, 
connecting the projections of $x$ and $y$ in the same component, to a path of 
midpoints with positive sign, arriving at the midpoint with positive sign lying over the projection of $y$. Finally, we connect this midpoint to $y$ by the other fiber.

\noindent Below, we show the sheets in Figure~\ref{fig:twosheets} to illustrate
that:

\begin{enumerate}[wide, labelwidth=!, labelindent=0pt,label={(\roman*)}]
\item the regions above and below the sheets can be connected
\item the region between the two sheets is the other connected component
\item the two points $q_2$ and $q_3$ are points in the projection where the sheets get connected (see assymptotes in Fig.\ \ref{fig:twosheets}). Thus, the variety describing the two sheets is connected.
\end{enumerate}
\vspace{-0.5cm}

\begin{figure}[!htbp]
  \resizebox{6cm}{!}{\Large
\setlength{\unitlength}{524.11450195bp}%
\begin{picture}(1,0.5423947)
\put(0,0){\includegraphics[width=\unitlength]{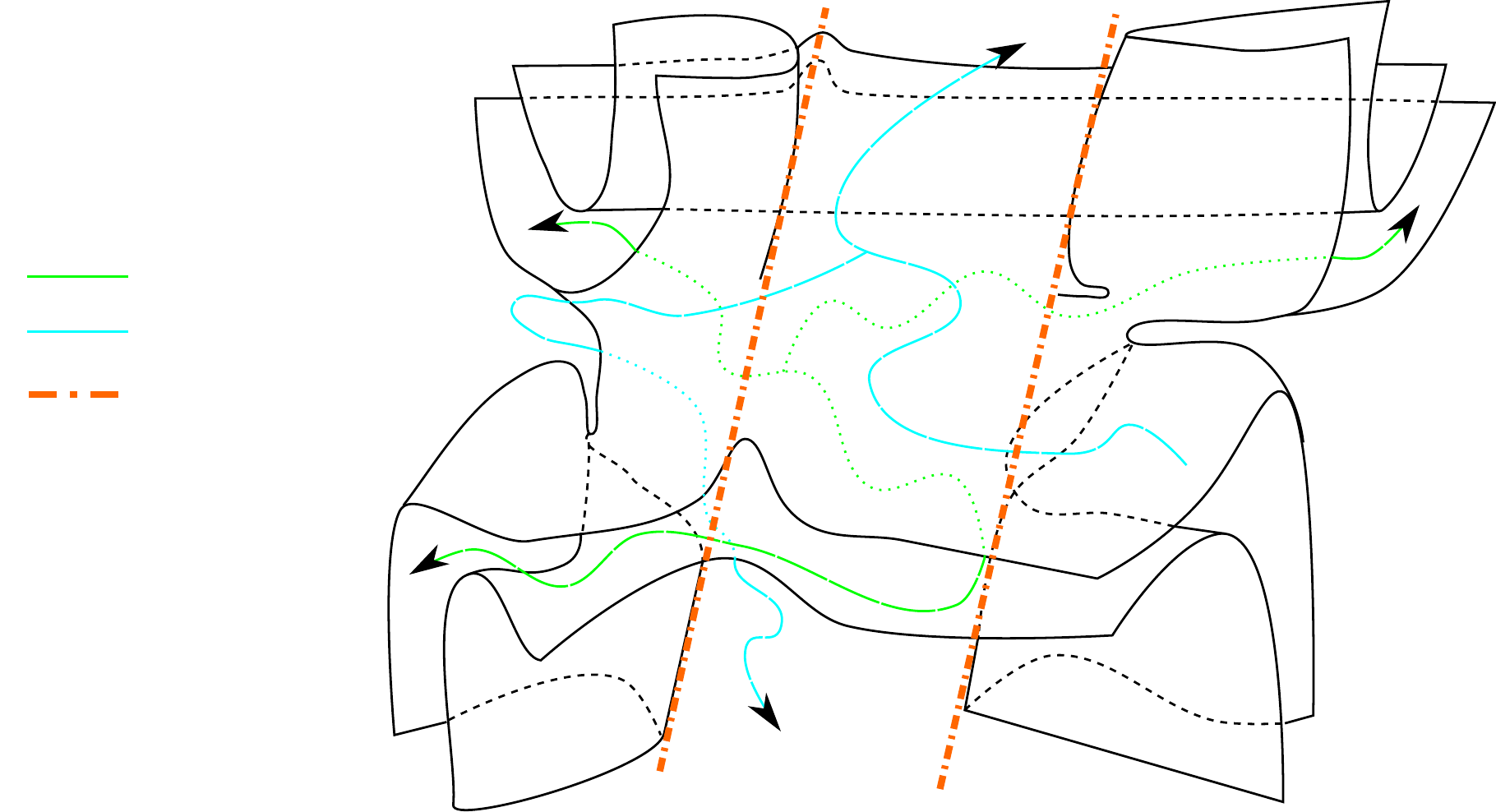}}
\put(0.01557385,0.40716984){\bf Legends:}
\put(0.09758084,0.35303333){path in component 1}
\put(0.09758085,0.31606529){path in component 2}
\put(0.09910724,0.27411097){asymptotes}
\end{picture}
  }
\caption{\small The two sheets of $\tilde{B}$}\label{fig:twosheets}
\end{figure}


\section{UR series}\label{ur_series}

We can make a general statement for robots belonging to the UR family (e.g.\ UR10, UR3 etc.\ ). We define the \emph{UR Family} to be robots which have a similar DH-parameter as the known UR robots (UR5, UR10 etc.), a robot in 
this family we shall call a \emph{UR robot}. Namely they are parameterised by
the following DH parameters :\\
distances (m.)\hfill
$(a_1,\dots,a_6):= (0,a_2,a_3,0,0,0)$\\
offsets (m.)\hfill
$(d_1,\dots,d_6) :=(0,0,0,d_4,d_5,0)$\\
twist angles (rad.)\hfill
$(\alpha_1,\dots, \alpha_6):=(\frac{\pi}{2},0,0,\frac{\pi}{2},-\frac{\pi}{2},0)$

\noindent i.e. these robots are parameterised by $4$ parameters: $a_2,a_3,d_4,d_5$.
 
We can write the largest (in number of terms and in degree) polynomial factor of the polynomial whose vanishing points is the kinematic singularity in configuration space of 
a UR robot as
{\small\begin{dmath*}
B=a_2v_2^2v_3^2v_4^2-a_3v_2^2v_3^2v_4^2-2d_5v_2^2v_3^2v_4-2d_5v_2^2v_3v_4^2-2d_5v_2v_3^2v_4^2+a_2v_2^2v_3^2+a_2v_2^2v_4^2-a_2v_3^2v_4^2-a_3v_2^2v_3^2+a_3v_2^2v_4^2+4a_3v_2v_3v_4^2+a_3v_3^2v_4^2+2d_5v_2^2v_3+2d_5v_2^2v_4+2d_5v_2v_3^2+8d_5v_2v_3v_4+2d_5v_2v_4^2+2d_5v_3^2v_4+2d_5v_3v_4^2+a_2v_2^2-a_2v_3^2-a_2v_4^2+a_3v_2^2+4a_3v_2v_3+a_3v_3^2-a_3v_4^2-2d_5v_2-2d_5v_3-2d_5v_4-a_2-a_3
\end{dmath*}
}
Note that $d_4$ does not affect the singularity of the robot. Taking the
discriminant of $B$ with respect to $v_2$ yields the sum of two squares i.e. the
product of two quadratic complex conjugate polynomials $\disc(B,v_2) = g\ol g$.
\begin{dmath*}
g = (-a_2v_3v_4+a_3v_3v_4+d_5v_3+d_5v_4+a_2+a_3) + (-d_5v_3v_4+a_2v_3+a_2v_4-a_3v_3+a_3v_4+d_5)i
\end{dmath*}
For a robot determined by some real quadruple $u\in\R^4$, let $A_u,B_u,g_u$ be the polynomials obtained by instantiating in $A,B,g$
the variables $a_2,a_3,d_4,d_5$ by the corresponding real values in the quadruple. 
Let $f:\R^3\to\R^2$ be the projection $(v_2,v_3,v_5)\mapsto (v_3,v_5)$. 
Let $Y_u\subset\R^2$ be the complement of (the union of the line $v_3=0$ and the common zero set of $\re g_u$ and $\im g_u)$.
Then the real zero set of $B_u$ in $\R^3$ intersected with $f^{-1}(Y_u)$ projects to surjectively to $Y_u$, in such a
way that there are two sheets, each projecting homeomorphically to $Y_u$.

For general robot $u$, the real set 
of $g$, which is meaning the set of all points in the real $(v_3,v_4)$-plane such that both the real part and the imaginary 
part of $g$ is equal to zero, is a finite subset of $\R^2$. 
All arguments from the previous sections work in this case as well. Hence 
we get $8$ components for these parameters'values. Moreover, we have paths connecting points in the same component, as in the previous section.

It remains to treat the non-general robots where the real zero set of $g$ is one-dimensional. This is the case if and only if 
$d_5=a_2^2-a_3^2=0$. The even more special case $d_5=a_2=a_3=0$ is easy to analyze: here, the determinant of the Jacobian $A$ is identically zero, which means that there are no 
non singular configurations. Excluding that case, we have two families of robots, and in each family, up to the value of $d_4$, the parameters are unique up to scaling. Without loss of generality, we can reduce to exactly two non-general robots $u'=(1,1,0,0)$ and $u''=(1,-1,0,0)$. 
Then the polynomial $B_{u'}$ has a factor is $C':=v_3v_4-v_3-v_4-1$, and the polynomial $B_{u''}$ has a factor
$C'':=v_3v_4+v_3+v_4-1$.
Apart from that complication, the analysis proceeds similar as in the general case: the set $Y_{u'}$ is the plane minus 
the line $v_3=0$ minus the hyperbola with equation $C'$, and the set $Y_{u''}$ is the plane minus the hyperbola with equation $C''$.
In both cases, the number of components of $Y$ is 5, as it can be
seen in Figure~\ref{fig:specialcase}.
Consequently, we have 20 components in total. The paths between points in the same component can be constructed 
similarly as in the general case.
\begin{figure}[b]\centering
\includegraphics[scale=0.2]{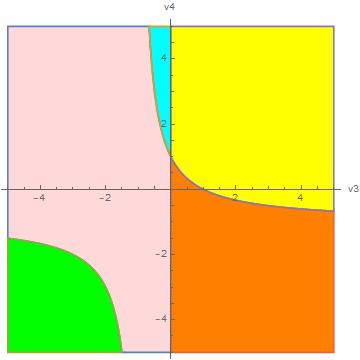}
\includegraphics[scale=0.2]{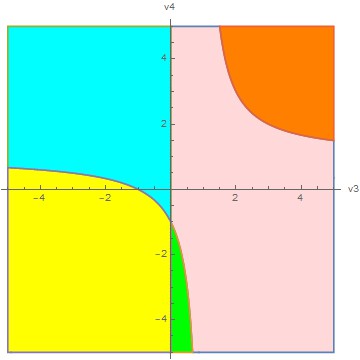}
\caption[.]{\small Left (resp. right) shows the components of $v_3(v_3v_4+v_3+v_4-1)\ne
  0$ (resp. $v_3(v_3v_4-v_3-v_4-1)\ne 0$) in $Y$}
\label{fig:specialcase}
\end{figure}


\section{Connectivity and roadmaps}\label{sec:roadmap}

We explain the ROADMAP algorithm for the special case where the semi-algebraic
set $S$ is given as a subset of some vector space $\R^N$, $N\in\N$, defined by
an equation $f(x_1,\dots,x_N)=0$ and an inequation $g(x_1,\dots,x_N)\ne 0$. We
assume that the algebraic set defined by $f=0$ is smooth. This is sufficient for
our application: the inequation is the determinant of the Jacobian of the
kinematic map $A$, and the equality is $s_2^2+c_2^2-1=0$.

One first reduces the problem to one where the semi-algebraic set we
consider is bounded. Note that there exists $R>0$ large enough such
that the connected components of $S$ are in one-to-one correspondence with the
intersection of $S$ with the hyper-ball defined by $\mathscr{N}_R\leq 0$ where
$\mathscr{N}_R = x_1^2+\cdots+x_n^2- R$. We denote this intersection by $S'$.
Note that a roadmap of $S'$ provides a roadmap of $S$. 

Determining such a large enough real number $R$ is done by choosing it larger
than the largest critical value of the restriction of the map $\bm{x}\to
\|\bm{x}\|^2$ to each regular strata of the the Euclidean closure of $S$. This
leads us to compute critical values of that map restricted to the hypersurface
defined by $f=0$ and next take the limits of the critical values of the sets
defined by $g=\pm\varepsilon$ and $f=g\pm\varepsilon=0$ when $\varepsilon\to 0$.

Next, we compute the critical values $\eta_1 < \cdots < \eta_s$ of the
restriction of the map $\bm{x}\to g(\bm{x})$ to the semi-algebraic set defined
by $f=0$ and $\mathscr{N}_R\leq 0$. Following Thom's isotopy lemma \cite{CS95}, when
$e$ is chosen between $0$ and $\min(|\eta_i|, 1\leq i \leq s)$, the connected
components of the semi-algebraic set $S^+_e$ (resp. $S^-_e$) defined by
$\mathscr{N}_R\leq 0, f=g-e=0$ (resp. $\mathscr{N}_R\leq 0, f=g+e=0$) are in
one-to-one correspondence with the connected components of the semi-algebraic
set defined by $\mathscr{N}_R\leq 0, f=0, g>0$ (resp. $\mathscr{N}_R\leq 0, f=0,
g>0$). Besides, $S^+_e\subset S'$ (resp. $S^-_e\subset S'$). Then a roadmap of
$S'$ is obtained by taking the union of a roadmap of $S^+_e$ with the roadmap of
$S_e^-$. Hence, we have performed a reduction to computing roadmaps in the
compact semi-algebraic sets $S_e^+$ and $S_e^-$.

In our application, the algebraic sets defined by the vanishing of all subsets
of the defining polynomials of $S_e^+$ and $S_e^-$ are smooth. Hence, we can
rely on a slight modification of the roadmap algorithm given in \cite{CannySAS}
where we replace computations with multivariate resultants for solving
polynomial systems by computations of Gr\"obner bases.

The algorithm in \cite{CannySAS} then takes as input a polynomial system
defining a closed and bounded semi-algebraic set $S$ and proceeds as follows.
The core idea is to start by computing a curve $\mathscr{C}$ which has a
non-empty intersection with each connected component of $S$. That curve will be
typically the critical locus on the $(x_1, x_2)$-plane when one is in generic
coordinates (else, one just needs to change linearly generically the coordinate
system). A few remarks are in order here. When $S$ is defined by
$f_1=\cdots=f_p=0$ and $g_1\geq 0,\cdots, g_s\geq 0$, to define the critical
locus of the projection on the $(x_1, x_2)$-plane restricted to $S$ one takes
the union of the critical loci of that projection restricted to the real
algebraic sets defined for all $\{i_1, \ldots, i_\ell\}\subset \{1, \ldots,
s\}$, by $f_1=\cdots=f_p=g_{i_1}=\cdots=g_{i_\ell}=0$ and intersect this union
of critical loci with $S$ (see~\cite{CannySAS}).

That way, one obtains curves that intersect all connected components of $S$ but
these intersections may not be connected. To repair these connectivity failures,
Canny's algorithm finds appropriate slices of $S$. Let $\pi_1$ be the canonical
projection $(x_1, \ldots, x_n)\to x_1$. This basically consists in
finding $\alpha_1 < \ldots <\alpha_k$ in $\R$ such that the union of $\cup_{i=1}^kS\cap
\pi_1^{-1}(\alpha_i)$ with the critical curve $\mathscr{C}$ has a non-empty and
connected intersection with each connected component of $S$. 

The way Canny proposes to find those $\alpha_i$'s is to compute the critical
values of the restriction of $\pi_1$ to $\mathscr{C}$. By the algebraic Sard's
theorem (see e.g. \cite[Appendix B]{SaSc17}), these values are in finite number
and Canny proposes to take $\alpha_1, \cdots, \alpha_k$ as those critical
values. This leads to compute with real algebraic numbers which can be encoded
with their minimal polynomials and isolating intervals. Since these minimal
polynomials may have large degrees (singly exponential in $n$), that step can be
prohibitive for practical computations. We use then the technique introduced in
\cite{MS04} which consists in replacing $\alpha_1 < \cdots < \alpha_k$ with
rational numbers $\rho_1 < \cdots < \rho_{k-1}$ with $\alpha_i < \rho_i <
\alpha_{i+1}$. We refer to \cite{MS04} for the rationale justifying this trick.
All in all, one obtains a recursive algorithm with a decreasing number of
variables at each recursive call. Combined with efficient Gr\"obner bases
engines, we illustrate in Section~\ref{sec:experiments} that the ROADMAP
algorithm (with the modifications introduced above) can be used in practice to
answer connectivity queries in semi-algebraic sets in concrete applications.

The concept of roadmap and the algorithm computing it, described above, may seem
cumbersome and unnecessarily sophisticated, especially when compared with the
much more direct CAD approach \cite{ScSh}. The CAD algorithm is
also a recursive algorithm, producing its recursive instance by projecting the
hypersurface to $\R^{n+1}$ and analyzing the discriminant. This leads to an
iteration of discriminants, and it is easy to see that the degree of the
iterated discriminants grows double exponentially in $n$: roughly, the degree of
the discriminant is squared in every iteration. There lies the motivation for
all the sophistication of the ROADMAP algorithm: for each instance in the all
recursive calls, the degree of the input polynomial is exactly the same as the
degree of the initially given polynomial $f$. This leads to an asymptotic
complexity which is only single exponential in $n^2$. We refer to \cite{SaSc11,
  BRSS, SaSc17, BR} for
more recent algorithms improving the complexity of roadmap computations.



\section{Parametric polynomial systems}

Let $F = (f_1, \ldots, f_p)$ and $G=(g_1, \ldots, g_q)$ in $\Q[\bm{x}, \bm{y}]$
with $\bm{x} = (x_1, \ldots, x_n)$ and $\bm{y} = (y_1, \ldots, y_t)$. We
consider further $\bm{y}$ as a sequence of parameters and the polynomial system
\[
f_1=\cdots=f_p=0, \qquad g_1\ \sigma_1\  0,\; \ldots, \;g_q\ \sigma_q\ 0
\]
with $\sigma_i\in \{>, \geq\}$. We let $S\subset \R^n\times \R^t$ be the
semi-algebraic set defined by this system. For $y\in \R^t$, we denote by $F_y$
and $G_y$ the sequences of polynomials obtained after instantiating $\bm{y}$ to
$y$ in $F$ and $G$ respectively. Also, we denote by $S_y\subset \R^n$ the
semi-algebraic set defined by the above system when $\bm{y}$ is specialized to
$y$. The algebraic set defined by the simultaneous vanishing of the entries of
$F$ (resp. $F_y$) is denoted by $V(F)\subset \C^{n+t}$ (resp. $V(F_y)\subset
\C^n$). 

We describe an algorithm for solving such a parametric polynomial system {\em
  without} assuming that for a {\em generic} point $y$ in $\C^t$, $V(F_y)$ is
finite. In that situation, solving such a parametric polynomial system may
consist in partitioning the parameters'space $\R^t$ into semi-algebraic sets
$T_1, \ldots, T_r$ such that, for $1\leq i \leq r$, the number of connected
components of $S_y$ is invariant for any choice of $y$ in $T_i$. We 
prove below that such an algorithmic problem makes sense.

\begin{proposition}
  Let $S\subset \R^n\times \R^t$ be a semi-algebraic set and $\pi$ be the
  canonical projection
\[
  (x_1, \ldots, x_n, y_1, \ldots, y_t) \to (y_1, \ldots, y_t).
\]
There exist semi-algebraic sets $T_1, \ldots, T_r$ in $\R^t$ such that
\begin{itemize}
\item $\R^t = T_1\cup\cdots \cup T_r$,
\item there exists $b_i\in \N$ such that for any $y\in T_i$, the number of
  connected components of $S_y$ is $b_i$.
\end{itemize}
\end{proposition}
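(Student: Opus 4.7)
The plan is to deduce the proposition from Hardt's semi-algebraic triviality theorem (see e.g.\ Bochnak--Coste--Roy, \emph{Real Algebraic Geometry}, Thm.~9.3.1) applied to the restriction $\pi|_S : S \to \R^t$ of the projection on the parameters. The first step is simply to note that both $S$ and $\pi|_S$ are semi-algebraic, since $S$ is defined by the input equalities and inequalities and $\pi$ is a polynomial (actually linear) map.

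Next, I would invoke Hardt's theorem to obtain a finite partition $\R^t = T_1 \cup \cdots \cup T_r$ into semi-algebraic sets together with, for each $i$, a semi-algebraic set $F_i \subset \R^n$ and a semi-algebraic homeomorphism
\[
\phi_i : \pi^{-1}(T_i) \cap S \;\longrightarrow\; T_i \times F_i
\]
which commutes with the projection to $T_i$. Restricting $\phi_i$ to a single fibre shows that for every $y \in T_i$ the fibre $S_y = \pi^{-1}(y) \cap S$ is semi-algebraically homeomorphic to $F_i$. The third step is then to recall that a semi-algebraic set has finitely many connected components and that semi-algebraic homeomorphisms preserve this number; hence setting $b_i$ to be the number of connected components of $F_i$ gives the required invariance, with $b_i \in \N$ finite.

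The first two bullet points in the conclusion are then immediate: the $T_i$ partition $\R^t$ by construction, and the fibre count over $T_i$ is the constant $b_i$. The only delicate point, which I regard as the main obstacle if one does not wish to cite Hardt as a black box, is the existence of the partition itself: to prove it from scratch one would build a cylindrical algebraic decomposition of $\R^{n+t}$ adapted to the defining polynomials of $S$ and project the induced cell decomposition of $\R^t$ to obtain the $T_i$, then use the cylindrical structure to show that the fibre topology is constant over each cell. Either route works, but Hardt's theorem is by far the shortest and most conceptual; it is also the natural framework for the algorithmic refinement developed in the next section, where the $T_i$ are actually computed.
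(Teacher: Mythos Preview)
Your proposal is correct and follows essentially the same route as the paper: both apply Hardt's semi-algebraic triviality theorem to the projection $\pi|_S$, obtain a finite semi-algebraic partition $T_1,\ldots,T_r$ of $\R^t$ with trivializations over each piece, and conclude by noting that each fibre $S_y$ is (semi-algebraically) homeomorphic to the model fibre, hence has the same finite number of connected components. The only cosmetic difference is that the paper takes the model fibre $E_i$ to be an actual fibre $\pi^{-1}(y)\cap S$ for some $y\in T_i$, whereas you introduce an abstract $F_i$; also, the relevant reference in Bochnak--Coste--Roy is Theorem~9.3.2 (9.3.1 is the definition of a trivialization).
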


\begin{proof}
  Observe that the restriction of $\pi$ to $S$ is semi-algebraically continuous.
  From Hardt's semi-algebraic triviality theorem \cite[Theorem 9.3.2]{bcr},
  there exists a finite partition of $\R^t$ into semi-algebraic sets $T_1,
  \ldots, T_r$ and for each $1\leq i \leq r$, a trivialization $\vartheta_i:
  T_i\times E_i \to \pi^{-1}(T_i)\cap S$ (where $E_i$ is a fiber
  $\pi^{-1}(y)\cap S$ for some $y\in T_i$).
  Fix $i$ and choose an arbitrary point $y'\in T_i$. Observe that we are done
  once we have proved that $\pi^{-1}(y')\cap S$ and $E_i$ have the same number
  of connected components. Recall that, by definition of a trivialization (see
  \cite[Definition 9.3.1]{bcr}), $\theta_i: T_i\times E_i \to \pi^{-1}(T_i)\cap
  S$ is a semi-algebraic homeomorphism and for any $(y', x)\in T_i\times E_i$,
  $\pi\circ\theta_i (y', x) = y'$. Hence, we deduce that $E_i$ is homeomorphic
  $\pi^{-1}(y')\cap S$. As a consequence, they both have the same number of
  connected components.
\end{proof}

Instead of computing a partition of the parameters'space into semi-algebraic
sets $T_1, \ldots, T_r$ as above, one will consider non-empty {\em disjoint}
open semi-algebraic sets $U_1, \ldots, U_\ell$ in $\R^t$ such that the
complement of $U_1\cup\cdots\cup U_\ell$ in $\R^t$ is a semi-algebraic set of
dimension less than $t$ and such that for $1\leq i \leq t$, there exists
$b_i\in\N$ such that $b_i$ is the number of connected components of $S_y$ for
any $y\in U_i$. For instance, one can take $U_1, \ldots, U_\ell$ as the
non-empty interiors (for the Euclidean topology) of $T_1, \ldots, T_r$.

Our strategy to solve this problem is to first compute a polynomial $\Delta$ in
$\Q[\bm{y}]-\{0\}$ defining a Zariski closed set $\mathscr{D}\subset \C^t$ such
that $\mathscr{D}$ contains $\R^t-\left ( U_1\cup\cdots\cup U_\ell\right )$. The
next lemma is immediate.

\begin{lemma}
  Let $\mathscr{E}\subset \R^t$ be a finite set of points which has a non-empty
  intersection with any of the connected components of the semi-algebraic set
  defined by $\Delta\neq 0$. For $1\leq i \leq \ell$, $\mathscr{E}\cap
  U_i$ is not empty.
\end{lemma}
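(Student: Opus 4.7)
The plan is to argue that every $U_i$ must contain at least one full connected component of the semi-algebraic set $W := \{y \in \R^t : \Delta(y) \neq 0\}$; once that is established, the sampling property of $\mathscr{E}$ immediately forces $\mathscr{E} \cap U_i \neq \emptyset$.

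First, I would show that $U_i \cap W$ is non-empty. The hypothesis $\mathscr{D} \supset \R^t \setminus (U_1 \cup \cdots \cup U_\ell)$ is equivalent to $W \subset U_1 \cup \cdots \cup U_\ell$. Since $\Delta \in \Q[\bm{y}] \setminus \{0\}$, its real zero set $\mathscr{D} \cap \R^t$ is a proper real algebraic subset of $\R^t$ and therefore has empty interior. Because $U_i$ is a non-empty open subset of $\R^t$, it cannot be contained in $\mathscr{D}$, so $U_i \setminus \mathscr{D} = U_i \cap W \neq \emptyset$.

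Next, I would pick any connected component $C$ of $W$ that meets $U_i$ and argue that $C \subset U_i$. The family $\{U_1 \cap W, \ldots, U_\ell \cap W\}$ is a pairwise disjoint open cover of $W$ (open in the subspace topology of $W$, since each $U_j$ is open in $\R^t$ and the $U_j$'s are pairwise disjoint). A connected subset of $W$ must therefore lie entirely in one of the $U_j \cap W$; since $C$ meets $U_i$ and the $U_j$'s are disjoint, that index must be $i$, whence $C \subset U_i$.

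Finally, by hypothesis $\mathscr{E}$ intersects every connected component of $W$, so $\mathscr{E} \cap C \neq \emptyset$, and since $C \subset U_i$ this gives $\mathscr{E} \cap U_i \neq \emptyset$. I do not foresee a serious obstacle: the main subtlety is just keeping track of the two facts that (a) $\Delta \neq 0$ forces $\mathscr{D}$ to have empty interior so that the open sets $U_i$ actually meet $W$, and (b) openness and disjointness of the $U_j$'s prevent a connected component of $W$ from being split between two different $U_j$'s.
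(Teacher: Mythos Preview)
Your argument is correct and is precisely the natural justification: the paper itself gives no proof, declaring the lemma ``immediate'' right before stating it. Your write-up simply supplies the details the authors left to the reader, using the same ingredients they set up (non-emptiness, openness, and disjointness of the $U_i$'s together with the containment $\{\Delta\neq 0\}\subset U_1\cup\cdots\cup U_\ell$).
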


Hence, computing sample points in each connected component of the set defined by $\Delta \neq 0$ (e.g. using the algorithm in \cite{SaSc03}
applied to the set defined by $z\Delta - 1 = 0$ where $z$ is a new
variable) is enough to obtain at least one point per connected component of
$U_1, \ldots, U_\ell$. Finally, for each such a point $y$, it remains to
count the number of connected components of the set $S_y$ by
using a roadmap algorithm.

We call {\em partial semi-algebraic resolution} of $(F, G)$ the data $(b_1,
\eta_1), \ldots, (b_k, \eta_k)$ where $b_i$ is the number of connected
components of $S_{\eta_i}$ and $\{\eta_1, \ldots, \eta_k\}$ has a non-empty
intersection with each connected component of $U_1\cup\cdots\cup U_{\ell}$.

Hence, our algorithm relies on three subroutines. The first one, which we call
{\sf Eliminate}, takes as input $F$ and $G$, as well as $\bm{x}$ and $\bm{y}$
and outputs $\Delta\in \Q[\bm{y}]$ as above ; we let $\mathscr{D} = V(\Delta)$.
The second one, which we call {\sf SamplePoints} takes as input $\Delta$ and
outputs a finite set of sample points $\{\eta_1, \ldots, \eta_k\}$ (with
$\eta_i\in\Q^t$) which meets each connected component of $\R^t - \mathscr{D}$.
The last one, which we call {\sf NumberOfConnectedComponents} takes $F_\eta$ and
$G_\eta$ and for some $\eta \in \Q^t$ and computes the number of connected
components of the semi-algebraic set $S_\eta$. The algorithm is described
hereafter.

\begin{center}
\begin{algorithm}
\KwData{Finite sequences $F$ and $G$ in $\Q[\bm{x}, \bm{y}]$ with $\bm{x} =
  (x_1, \ldots, x_n)$ and $\bm{y} = (y_1, \ldots, y_t)$.}
\KwResult{a partial semi-algebraic resolution of $(F, G)$}
\SetAlgoNoLine
{$\Delta \gets {\sf Eliminate}(F, G, \bm{x}, \bm{y})$}\\
{$\{\eta_1, \ldots, \eta_k\} \gets {\sf SamplePoints}(\Delta\neq 0)$}\\
\For{$i$ from $1$ to $k$}{
  $b_i = {\sf NumberOfConnectedComponents}(F_{\eta_i}, G_{\eta_i})$
}
\Return $\{(b_1, \eta_1), \ldots, (b_k, \eta_k)\}$.
\caption{{\sf ParametricSolve$(F, G, \bm{x}, \bm{y})$}}
\label{algo:solveparam}
\end{algorithm}
\end{center}
While the rationale of algorithm {\sf ParametricSolve} is mostly
straightforward, detailing each of its subroutines is less. The easiest ones are
{\sf SamplePoints} and {\sf NumberOfConnectedComponents}: they rely on
known algorithms using the critical point method \cite{Basu97, BPR}, polar
varieties \cite{SaSc03, S05, BGHSS, BGHM} and for computing roadmaps
\cite{BPR00, BRSS, SaSc17, SaSc11}. 

The most difficult one is subroutine {\sf Eliminate}. We provide a detailed
description of it under the following regularity assumption. We say that $(F, G)$
satisfies assumption $({\sf A})$ 
\begin{itemize}
\item[$({\sf A})$] for any $\{i_1, \ldots, i_s\}$ in $\{1, \ldots, q\}$, the
  Jacobian matrix associated to $(f_1, \ldots, f_p, g_{i_1}, \ldots, g_{i_s})$
  has maximal rank at any complex solution to
  \[
    f_1=\cdots=f_p=g_{i_1}=\cdots=g_{i_s}=0
  \]
\end{itemize}
Note that using the Jacobian criterion \cite[Chap. 16]{Eisenbud}, it is easy to decide whether $({\sf A})$
holds. Note also that it holds generically.

For $\bm{i} = \{i_1, \ldots, i_s\}\subset \{1, \ldots, q\}$, under assumption
$({\sf A})$, the algebraic set $V_{\bm{i}}\subset \C^{n+t}$ defined by
  \[
    f_1=\cdots=f_p=g_{i_1}=\cdots=g_{i_s}=0.
  \]
  are smooth and equidimensional and these systems generate radical ideals
  (applying the Jacobian criterion \cite[Theorem 16.19]{Eisenbud}). Besides, the
  tangent space to $z\in V_{\bm{i}}$ coincides with the the (left) kernel of
  the Jacobian matrices associated to $(f_1, \ldots, f_p, g_{i_1}, \ldots,
  g_{i_s})$ at $z$.

  Let $I$ be the ideal generated by $(f_1, \ldots, f_p, g_{i_1}, \ldots,
  g_{i_s})$ and the maximal minors of the truncated Jacobian matrix associated
  to $(f_1, \ldots, f_p, g_{i_1}, \ldots, g_{i_s})$ obtained by removing the
  columns corresponding to the partial derivatives w.r.t. the
  $\bm{y}$-variables. Under assumption $({\sf A})$, one can compute the set of
  critical values of the restriction of the projection $\pi$ to the algebraic
  set $V_{\bm{i}}$ by eliminating the variables $\bm{x}$ from $I$.

  Hence, using elimination algorithms, which include Gr\"obner bases \cite{F4,
    F5} with elimination monomial orderings, or triangular sets (see e.g.
  \cite{Wang01, ALM99}) or geometric resolution algorithms \cite{GeoReso,
    GiHeMoMoPa98, GiHeMoPa95}, one can compute a polynomial $\Delta_{\bm{i}}\in
  \Q[\bm{y}]$ whose vanishing set is the set of critical values of the
  restriction of $\pi$ to $V_{\bm{i}}$. By the algebraic Sard's theorem (see e.g.
  \cite[App. A]{SaSc17}), $\Delta_{\bm{i}}$ is not identically zero (the 
  critical values are contained in a Zariski closed subset of $\C^t$).

  Under assumption $({\sf A})$, we define the set of critical points
  (resp. values) of the restriction of $\pi$ to the Euclidean closure of $S$ as
  the union of the set of critical points (resp. values) of the restriction of
  $\pi$ to $V_{\bm{i}}\cap \R^{n+t}$ when $\bm{i}$ ranges over the subsets of
  $\{1, \ldots, q\}$. We denote the Euclidean closure of $S$ by $\overline{S}$,
  the set of critical points (resp. values) of the restriction of $\pi$ to
  $\overline{S}$ by $\mathscr{W}(\pi, \overline{S})$ (resp. $\mathscr{D}(\pi,
  \overline{S})$).

  We say that $S$ satisfies a {\em properness} assumption ${\sf (P)}$ if:
  \begin{itemize}
  \item[${\sf (P)}$] the restriction of $\pi$ to $\overline{S}$ is proper
    ($\forall y\in \R^t$, there exists a ball $B\ni y$ s.t.
    $\pi^{-1}(B)\cap\overline{S}$ is closed and bounded).
  \end{itemize}

  Our interest in critical points and values is motivated by the semi-algebraic
  version of Thom's isotopy lemma (see \cite{CS95}) which states the following,
  under  assumption ${\sf (P)}$. Take an open semi-algebraic subset
  $U\subset \R^t$ which does not meet the set of critical values of the
  restriction of $\pi$ to $\overline{S}$, $y\in U$ and $E = \pi^{-1}(y)\cap S$.
  Then, there exists a semi-algebraic trivialization $\vartheta: U \times E\to
  \pi^{-1}(U)\cap S$. 

  Hence, $\cup_{\bm{i}\subset \{1, \ldots, q\}}\mathscr{D}(\pi, V_{\bm{i}})$
  contains the boun\-daries of the open disjoint semi-algebraic
  set $U_1, \ldots, U_\ell$. Recall that by Sard's theorem it has
  co-dimension $\geq 1$. This leads to the following
  algorithm.
\begin{center}
\begin{algorithm}
\KwData{Finite sequences $F$ and $G$ in $\Q[\bm{x}, \bm{y}]$ with $\bm{x} =
  (x_1, \ldots, x_n)$ and $\bm{y} = (y_1, \ldots, y_t)$, defining a
  semi-algebraic set $S\subset \R^n \times \R^t$. \\
Assumes that  assumptions $({\sf A})$ and ${\sf (P)}$ hold.}
\KwResult{$\Delta \in \Q[\bm{y}]$ such that $\pi$ realizes a fibration over all
  connected components of $\R^t - \{\Delta = 0\}$}
\SetAlgoNoLine
\For{ all subsets $\bm{i}$ in $\{1, \ldots, q\}$}{
  $\mathcal{M}\gets $ maximal minors of ${\rm jac}([F, G_{\bm{i}}], \bm{x})$ \\
  $\Delta_{\bm{i}} \gets {\sf AlgebraicElimination}([F, G_{\bm{i}},
  \mathcal{M}], \bm{x})$
}
$\Delta \gets \prod_{\bm{i}}\Delta_{\bm{i}}$.\\
\Return $\Delta$.
\caption{{\sf EliminateProper$(F, G, \bm{x}, \bm{y})$}}
\label{algo:elimproper}
\end{algorithm}
\end{center}
\begin{lemma}
  On input $(F, G)$ in $\Q[\bm{x}, \bm{y}]$ satisfying $({\sf A})$, algorithm
  {\sf EliminateProper} is correct.
\end{lemma}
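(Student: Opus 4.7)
The plan is to show that the polynomial $\Delta$ returned by {\sf EliminateProper} defines a Zariski closed set that contains $\mathscr{D}(\pi,\overline{S})$, and then to invoke Thom's isotopy lemma (which holds under ${\sf (P)}$) to conclude that the restriction of $\pi$ to $S$ is a fibration over every connected component of $\R^t - \{\Delta=0\}$.

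First, I would fix a subset $\bm{i}\subset\{1,\ldots,q\}$ and argue that the construction of $\Delta_{\bm{i}}$ correctly produces a polynomial whose vanishing set in $\C^t$ contains the critical values of $\pi|_{V_{\bm{i}}}$. Under $({\sf A})$, the Jacobian criterion guarantees that $V_{\bm{i}}$ is smooth and equidimensional of codimension $p+s$ in $\C^{n+t}$ (where $s=|\bm{i}|$), and that the tangent space $T_z V_{\bm{i}}$ coincides with the kernel of the full Jacobian of $(F,G_{\bm{i}})$ at $z$. A point $z\in V_{\bm{i}}$ is critical for $\pi|_{V_{\bm{i}}}$ exactly when $d_z\pi$ restricted to $T_zV_{\bm{i}}$ fails to be surjective; unwinding this via a standard dimension count shows it is equivalent to the truncated Jacobian ${\rm jac}([F,G_{\bm{i}}],\bm{x})$ having rank strictly less than $p+s$. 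This is precisely the vanishing of all maximal minors $\mathcal{M}$, so the set of critical points of $\pi|_{V_{\bm{i}}}$ is the algebraic set defined by the ideal input to {\sf AlgebraicElimination}. Eliminating $\bm{x}$ computes the Zariski closure of its image under $\pi$, which by the algebraic Sard's theorem has codimension at least one in $\C^t$; hence $\Delta_{\bm{i}}\not\equiv 0$ and its zero set contains the critical values of $\pi|_{V_{\bm{i}}}$.

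Second, I would take the product: $\Delta = \prod_{\bm{i}}\Delta_{\bm{i}}$ still lies in $\Q[\bm{y}]-\{0\}$ and its complex zero set equals $\bigcup_{\bm{i}}V(\Delta_{\bm{i}})$. By the definition of $\mathscr{D}(\pi,\overline{S})$ given just above the statement, this zero set contains $\mathscr{D}(\pi,\overline{S})$. In particular, every connected component $C$ of $\R^t - \{\Delta=0\}$ is an open semi-algebraic set that avoids $\mathscr{D}(\pi,\overline{S})$.

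Third, I would invoke the semi-algebraic Thom isotopy lemma, which under the properness assumption ${\sf (P)}$ guarantees that for any open semi-algebraic set $U\subset \R^t$ not meeting $\mathscr{D}(\pi,\overline{S})$, any $y\in U$, and $E=\pi^{-1}(y)\cap S$, there exists a semi-algebraic homeomorphism $\vartheta: U\times E \to \pi^{-1}(U)\cap S$ compatible with $\pi$. Applying this to each connected component $C$ of $\R^t - \{\Delta=0\}$ shows that $\pi$ realizes a fibration over $C$, which is the required conclusion.

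The main obstacle I anticipate is the bookkeeping in the first step: one has to justify carefully that the critical points of $\pi|_{V_{\bm{i}}}$ are cut out, inside $V_{\bm{i}}$, by the maximal minors of the truncated Jacobian, and then that ordinary elimination of $\bm{x}$ from the ideal $I$ actually yields the Zariski closure of $\pi(\mathscr{W}(\pi,V_{\bm{i}}))$. Both facts rely crucially on the smoothness and radicality consequences of $({\sf A})$ noted in the paragraphs preceding the statement; without $({\sf A})$ one would have to worry about embedded components and singular strata, and the Jacobian criterion would not directly identify the tangent space with the stated kernel.
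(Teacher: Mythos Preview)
Your proposal is correct and mirrors the paper's own reasoning. The paper does not give a separate proof of this lemma; correctness is treated as immediate from the exposition preceding the statement, which already establishes (under $({\sf A})$) that each $V_{\bm{i}}$ is smooth with tangent space equal to the kernel of the full Jacobian, that the maximal minors of the truncated Jacobian cut out the critical points of $\pi|_{V_{\bm{i}}}$, that elimination yields the critical values $\Delta_{\bm{i}}\not\equiv 0$ by algebraic Sard, and finally that Thom's isotopy lemma (under $({\sf P})$, which is part of the algorithm's stated assumptions) gives the trivialization over each connected component of $\R^t-\{\Delta=0\}$. Your three steps are exactly this argument written out as a proof.
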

For some applications, deciding if ${\sf (P)}$ holds is easy (e.g. when the
inequalities in $G$ define a box). However, in general, one needs to generalize
{\sf EliminateProper} to situations where ${\sf (P)}$ does not hold.

To do so, we use a classical technique from effective real algebraic geometry.
Let $\varepsilon$ be an infinitesimal and $\R\langle \varepsilon \rangle$ be the
field of Puiseux series in $\varepsilon$ with coefficients in $\R$. By
\cite[Chap. 2]{BPR}, $\R\langle \varepsilon \rangle$ is a real closed field and
one can define semi-algebraic sets over $\R\langle \varepsilon \rangle^{n+t}$.
In particular the set solutions in $\R\langle \varepsilon \rangle^{n+t}$ to the
system defining $S$ is a semi-algebraic set which we denote by ${\rm ext}(S,
\R\langle \varepsilon \rangle)$. We refer to \cite{BPR} for properties of real
Puiseux series fields and semi-algebraic sets defined over such field. 
We make use of the notions of bounded points of $\R\langle
\varepsilon \rangle^n$ over $\R$ (those whose all coordinates have non-negative
valuation) and their limits in $\R$ (when $\varepsilon \to 0$). We denote by
$\lim_0$ the operator taking the limits of such points. 

For $a = (a_1, \ldots, a_n)$, we consider the intersection of ${\rm ext}(S,
\R\langle \varepsilon \rangle)$ with the semi-algebraic set defined by
\[
\Phi^{(a)} = a_1x_1^2+\cdots + a_nx_{n}^2 - 1/\varepsilon\leq 0
\]
where $a_i > 0$ in $\R$ for $1\leq i \leq n$. We denote by $S'_\epsilon$ this
intersection. Since $a_i>0$ for all $1\leq i \leq n$,  $S'_\epsilon$ satisfies ${\sf (P)}$.

\begin{lemma}
  Assume that $(F, G)$ satisfies ${\sf (A)}$. There exists a non-empty Zariski
  open set $\mathscr{A}\subset \C^n$ such that for any choice of $a = (a_1,
  \ldots, a_n)\in \mathscr{A}$, $(F, G^{(a)})$ satisfies $({\sf A})$ with
  $G^{(a)} = G \cup \{\Phi^{(a)}\}$.
\end{lemma}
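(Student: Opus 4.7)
The plan is a Bertini-type generic transversality argument, carried out separately for each subset of the inequality set. Observe that $(F, G^{(a)})$ satisfies $({\sf A})$ precisely when, for every $\bm{i}\subset\{1,\ldots,q\}$, the Jacobian of $(F, G_{\bm{i}}, \Phi^{(a)})$ has maximal rank at every common zero of these polynomials (the subsets of $G^{(a)}$ that do not contain $\Phi^{(a)}$ are handled by the hypothesis that $(F,G)$ satisfies $({\sf A})$). Since there are only finitely many such $\bm{i}$, it suffices to exhibit for each one a non-empty Zariski open set $\mathscr{A}_{\bm{i}}\subset\C^n$ of good parameters; then $\mathscr{A}=\bigcap_{\bm{i}}\mathscr{A}_{\bm{i}}$ is the required set.

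Fix $\bm{i}$. By $({\sf A})$ for $(F,G)$, the variety $V_{\bm{i}}$ is smooth, and the required condition on the augmented system is exactly that the hypersurface $\{\Phi^{(a)}=0\}$ cuts $V_{\bm{i}}$ transversally. I form the incidence variety
\[
W_{\bm{i}}=\{(z,a)\in V_{\bm{i}}\times\C^n : \Phi^{(a)}(z)=0\}
\]
and consider its projection $p_{\bm{i}}:W_{\bm{i}}\to \C^n$ onto the parameter space. The key observation is that $W_{\bm{i}}$ is smooth: at any point $(z,a)\in W_{\bm{i}}$ one has $\sum_k a_k x_k(z)^2 = 1/\varepsilon \neq 0$, hence some coordinate $x_k(z)$ is nonzero, and then $\partial \Phi^{(a)}/\partial a_k = x_k(z)^2 \neq 0$. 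So the single defining equation of $W_{\bm{i}}$ in the smooth ambient space $V_{\bm{i}}\times\C^n$ has non-vanishing gradient (in the $a$-directions alone) at every point of its zero locus, which forces $W_{\bm{i}}$ to be smooth of the expected codimension one.

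The algebraic Sard's theorem (as in \cite[App.~A]{SaSc17}) applied to $p_{\bm{i}}$ then yields a proper Zariski closed set $\mathscr{B}_{\bm{i}}\subsetneq\C^n$ of critical values; for any $a\notin\mathscr{B}_{\bm{i}}$, the fiber $p_{\bm{i}}^{-1}(a)=V_{\bm{i}}\cap\{\Phi^{(a)}=0\}$ is smooth (possibly empty), which is precisely the maximal-rank condition we want. Setting $\mathscr{A}=\C^n\setminus\bigcup_{\bm{i}}\mathscr{B}_{\bm{i}}$ concludes the argument. The only delicate bookkeeping is the ground field: $\Phi^{(a)}$ has the coefficient $1/\varepsilon$ from $\R\langle\varepsilon\rangle$, but since it enters only as a fixed nonzero scalar, the Bertini/Sard computation goes through verbatim over the appropriate algebraic closure, and the ``bad'' locus is cut out by polynomials in $a$ with coefficients in $\Q[\varepsilon^{-1}]$, whose common zero set specializes to a proper Zariski closed subset of $\C^n$. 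I expect that this coefficient-field issue, rather than the geometric transversality, is the main point that needs care in a fully rigorous write-up.
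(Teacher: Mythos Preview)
Your argument is correct and is essentially the same as the paper's: the paper packages the incidence-variety-plus-Sard step as an invocation of Thom's weak transversality theorem applied to the map $\Psi:(z,a)\mapsto (F(z),G_{\bm{i}}(z),\Phi^{(a)}(z))$, whereas you unfold that theorem by hand. Your observation that $\sum_k a_k x_k(z)^2=1/\varepsilon\neq 0$ forces some $x_k(z)\neq 0$ is exactly what makes $\mathbf{0}$ a regular value of $\Psi$ (the paper states this more tersely as ``since $(F,G)$ satisfies ${\sf (A)}$'', implicitly using the nonvanishing $a$-derivative as well).
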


\begin{proof}
  Let $\bm{i} = \{i_1, \ldots, i_s\}\subset \{1, \ldots, q\}$. We prove below
  that there exists a non-empty Zariski open set $\mathscr{A}_{\bm{i}}\subset
  \C^n$ such that for $(a_1, \ldots, a_n)\in \mathscr{A}_{\bm{i}}$, the
  following property ${\sf (A)}_{\bm{i}}$ holds. Denoting by $G^{(a), \bm{i}}$
  the sequence $(g_{i_1}, \ldots, g_{i_s}, \Phi^{(a)})$, the Jacobian matrix of
  $(F, G^{(a), \bm{i}})$ has maximal rank at any point of $V(F, G^{(a),
    \bm{i}})$. Taking the intersection of the (finitely many)
  $\mathscr{A}_{\bm{i}}$'s is then enough to define $\mathscr{A}$.

  Consider new indeterminates $\alpha_1, \ldots, \alpha_n$ and the polynomial
  $\Phi^{(\alpha)} = \alpha_1 x_1^2+\cdots +\alpha_nx_n^2 - 1/\varepsilon$. Let
  $\Psi$ be the map
  \[
  \Psi:  (x, a)\to F(x), g_{i_1}(x), \ldots, g_{i_s}(x), \Phi^{(a)}(x)
  \]
  Observe that $\mathbf{0}$ is a regular value for $\Psi$ since $(F, G)$
  satisfies ${\sf (A)}$. Hence, Thom's weak
  transversality theorem (see e.g. \cite[App. B]{SaSc17}) implies that there
  exists $\mathscr{A}_{\bm{i}}$ such that ${\sf (A)}_{\bm{i}}$ for any $a\in
  \mathscr{A}_{\bm{i}}$. 
\end{proof}
Assume for the moment that $(F, G')$ satisfies assumption {\sf (A)}. Observe
that the coefficients of $F$ and $G'$ lie in $\Q(\varepsilon)$. Hence, applying
the subroutine {\sf EliminateProper} to $(F, G')$ and the above inequality will
output a polynomial $\Delta_\varepsilon\in \Q(\varepsilon)[\bm{y}]$ such that
the restriction of $\pi$ to $\overline{S'_\varepsilon}$ realizes a
trivialization over each connected component of $\R\langle \varepsilon \rangle^t
- \{\Delta_\varepsilon = 0\}$.
Without loss of generality, one can assume that $\Delta_\varepsilon\in
\Q[\varepsilon][\bm{y}]$ and has content $1$. In other words, one can write
$\Delta_\varepsilon = \Delta_0 + \varepsilon \tilde{\Delta}$ with $\Delta_0\in
\Q[\bm{y}]$ and $\tilde{\Delta}\in \Q[\varepsilon][\bm{y}]$.

\begin{lemma}
  Let $U$ be a connected component of $\R^t-\{\Delta_0 = 0\}$. Then, there
  exists a semi-algebraically connected component $U_\varepsilon$ of $\R\langle
  \varepsilon \rangle^t-\{\Delta_\varepsilon = 0\}$ such that ${\rm ext}(U,
  \R\langle \varepsilon \rangle)\subset U_\varepsilon$.
\end{lemma}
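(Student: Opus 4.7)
The plan is to prove the lemma by showing two things about ${\rm ext}(U,\R\langle\varepsilon\rangle)$: that it is semi-algebraically connected as a subset of $\R\langle\varepsilon\rangle^t$, and that it is disjoint from $\{\Delta_\varepsilon=0\}$. Together these two facts force it into a single semi-algebraic connected component of $\R\langle\varepsilon\rangle^t - \{\Delta_\varepsilon=0\}$, which we take as $U_\varepsilon$.

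For the semi-algebraic connectedness I would invoke the classical transfer principle that the number of semi-algebraically connected components of a semi-algebraic set is preserved under extension to any larger real closed field (see \cite[Chap.~5]{BPR} or \cite[Chap.~2]{bcr}). Since $U$ is a single connected component of $\R^t-\{\Delta_0=0\}$, ${\rm ext}(U,\R\langle\varepsilon\rangle)$ is a single semi-algebraically connected subset of $\R\langle\varepsilon\rangle^t-\{\Delta_0=0\}$. For the disjointness from $\{\Delta_\varepsilon=0\}$, the base case of a standard real point $y\in U$ is immediate: $\Delta_\varepsilon(y)=\Delta_0(y)+\varepsilon\tilde\Delta(y)\in\R[\varepsilon]$ has nonzero constant term $\Delta_0(y)$ and is therefore nonzero in $\R\langle\varepsilon\rangle$. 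Moreover, on any compact semi-algebraic $K\subset U$, the uniform estimate $|\Delta_\varepsilon-\Delta_0|\le|\varepsilon|\max_K|\tilde\Delta|$ is infinitesimal while $|\Delta_0|$ is bounded below on $K$ by a positive standard real, so $\Delta_\varepsilon$ does not vanish on ${\rm ext}(K,\R\langle\varepsilon\rangle)$; via a semi-algebraic cell decomposition of $U$ into relatively compact pieces, this propagates to every Puiseux-series point whose standard part lies strictly inside $U$.

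The main obstacle is ruling out cancellations $\Delta_0(y)+\varepsilon\tilde\Delta(y)=0$ at Puiseux-series points $y\in{\rm ext}(U,\R\langle\varepsilon\rangle)$ whose standard part sits on $\partial U\subset\{\Delta_0=0\}$ or at infinity, for the compact estimate then degenerates and the valuations of $\Delta_0(y)$ and $\varepsilon\tilde\Delta(y)$ can coincide. Here I would exploit the specific origin of $\Delta_\varepsilon$ in {\sf EliminateProper}: by Thom's isotopy lemma, $\pi$ realizes a semi-algebraic trivialization of $\overline{S'_\varepsilon}$ over each connected component of $\R\langle\varepsilon\rangle^t-\{\Delta_\varepsilon=0\}$, so any hypothetical cancellation inside ${\rm ext}(U,\R\langle\varepsilon\rangle)$ would force a jump in the fiber topology that is inconsistent with the real-side fibration guaranteed over $U$. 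Once the non-vanishing of $\Delta_\varepsilon$ on ${\rm ext}(U,\R\langle\varepsilon\rangle)$ is secured, the connectedness of the latter places it inside a single connected component $U_\varepsilon$ of $\R\langle\varepsilon\rangle^t-\{\Delta_\varepsilon=0\}$, as required.
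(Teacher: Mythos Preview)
The obstacle you flag is real, and your proposed fix does not close it. Invoking Thom's isotopy lemma only tells you that $\pi$ trivializes $\overline{S'_\varepsilon}$ over each component of $\{\Delta_\varepsilon\neq 0\}$; it says nothing about whether ${\rm ext}(U,\R\langle\varepsilon\rangle)$ lies inside one such component --- that is exactly what you are trying to prove, so the argument is circular. There is also no ``real-side fibration guaranteed over $U$'' available at this point; that fibration is a \emph{consequence} of the lemma, not an input to it. Worse, for an arbitrary decomposition $\Delta_\varepsilon=\Delta_0+\varepsilon\tilde\Delta$ the obstacle is genuine: with $\Delta_0=y_1$ and $\tilde\Delta=-1$, the point $y_1=\varepsilon$ lies in ${\rm ext}(\{y_1>0\},\R\langle\varepsilon\rangle)$ yet $\Delta_\varepsilon$ vanishes there. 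So no purely algebraic non-vanishing argument on the full extension can succeed.

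The paper sidesteps this entirely. It never attempts non-vanishing on all of ${\rm ext}(U,\R\langle\varepsilon\rangle)$. Instead, for any two \emph{standard} points $y,y'\in U$ it takes a semi-algebraic path $\gamma:[0,1]\to U$ and exploits the compactness of $[0,1]$: every $\vartheta\in{\rm ext}([0,1],\R\langle\varepsilon\rangle)$ is bounded with $\lim_0\vartheta\in[0,1]$, hence $\gamma(\lim_0\vartheta)\in U$ and $\Delta_0$ there is a strictly positive real, forcing $\Delta_\varepsilon(\gamma(\vartheta))>0$. Thus $\Delta_\varepsilon$ is sign-invariant along the extended path, and $y,y'$ land in the same component $U_\varepsilon$. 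The key device is compactness of the connecting curve, which keeps every limit point strictly inside $U$ and so keeps $\Delta_0$ bounded away from zero --- precisely the estimate that degenerates at your boundary points. In effect the paper establishes $U\subset U_\varepsilon$ (which is what the downstream argument actually uses), rather than chasing non-vanishing at non-standard points of the extension.
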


\begin{proof}
  Let $y$ and $y'$ be two distinct points in $U$. Since $U$ is a
  semi-algebraically connected component of $\R^t -\{\Delta_0=0\}$, there exists
  a semi-algebraic continuous function $\gamma: [0, 1]\to U$ with $\gamma(0)=y$
  and $\gamma(1)=y'$ such that $\Delta_0$ is sign invariant over $\gamma([0,
  1])$ (assume, without loss of generality that it is positive). Note also for
  all $t\in [0,1]$, $\Delta_0(\gamma(t))\in \R$. We deduce that
  $\Delta_\varepsilon(\gamma(t)) > 0$ for all $t\in [0,1]$. Now, take
  $\vartheta\in {\rm ext}([0, 1], \R\langle \varepsilon \rangle)$. Observe that
  $\vartheta$ is bounded over $\R$ and then $\lim_0\vartheta$ exists and lies in
  $[0, 1]$. We deduce that $\Delta_{\varepsilon}(\lim_0\vartheta) > 0$ and its
  limit when $\varepsilon\to 0$ is $\Delta_0(\lim_0\vartheta) > 0$ in $\R$. We
  deduce that $\Delta_\varepsilon(\vartheta) > 0$. Hence, $\Delta_\varepsilon$
  is sign invariant over ${\rm ext}(\gamma([0, 1]), \R\langle \varepsilon
  \rangle)$ and then $y$ and $y'$ both lie in the same semi-algebraically
  connected component of $\R\langle \varepsilon \rangle^t - \{\Delta_\varepsilon
  = 0\}$.
\end{proof}
We deduce that there exists $b'\in \N$ such that for all $y\in U$, the number of
semi-algebraically connected components of $S'_\varepsilon\cap\pi^{-1}(y)$ is
$b$. Using the transfer principle as in \cite{BRSS}, we deduce that there exists
$e'\in \R$ positive and small enough such that, the following holds. There
exists $b\in \N$ such that for all $e\in ]0, e'[$ the number of connected
components of $S\cap \{a_1x_1^2+\cdots+a_nx_n^2\leq \frac{1}{e}\}\cap \pi^{-1}(y)$.
is $b$ when $y$ ranges over $U$. This proves the following lemma. 

\begin{lemma}
  Let $U$ be as above. Then the number of connected components of $S_y$ is
  invariant when $y$ ranges over $U$.
\end{lemma}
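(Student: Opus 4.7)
The plan is to leverage the conclusions already assembled in the paragraphs preceding the statement. The preceding lemma shows that $\mathrm{ext}(U, \R\langle\varepsilon\rangle)$ lies inside one semi-algebraically connected component $U_\varepsilon$ of $\R\langle\varepsilon\rangle^t\setminus\{\Delta_\varepsilon=0\}$, while by correctness of \textsf{EliminateProper} applied to $(F,G')$ with $G' = G\cup\{\Phi^{(a)}\}$ --- which satisfies $({\sf A})$ thanks to the previous Zariski-genericity lemma on the choice of $a$, and $({\sf P})$ because of the ball constraint $\Phi^{(a)}\leq 0$ --- the polynomial $\Delta_\varepsilon$ contains all critical values of $\pi$ restricted to $\overline{S'_\varepsilon}$. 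Hence $\pi$ restricted to $\overline{S'_\varepsilon}$ is proper and has no critical value above $U_\varepsilon$, so the semi-algebraic version of Thom's isotopy lemma gives a semi-algebraic trivialisation $U_\varepsilon\times E\to \pi^{-1}(U_\varepsilon)\cap S'_\varepsilon$, and the number $b'$ of semi-algebraically connected components of $\pi^{-1}(y)\cap S'_\varepsilon$ is independent of $y\in U_\varepsilon$.

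Next I would transfer this invariance back to $\R$. For the fixed natural number $b'$ produced above, the semi-algebraic formula $\Theta(e)$ expressing ``for every $y\in U$, the set $S_y \cap \{a_1 x_1^2+\cdots+a_n x_n^2\leq 1/e\}$ has exactly $b'$ connected components'' is true in $\R\langle\varepsilon\rangle$ when $e=\varepsilon$, by the previous step combined with the inclusion $\mathrm{ext}(U,\R\langle\varepsilon\rangle)\subset U_\varepsilon$. By the transfer principle, as invoked in \cite{BRSS}, there exists $e'>0$ in $\R$ such that $\Theta(e)$ holds for every $e\in(0,e')$.

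Finally I would let $e$ tend to $0$ pointwise on $U$. For any fixed $y\in U$, the semi-algebraic set $S_y$ has only finitely many connected components, and for $e$ small enough each of them meets the ball $\{a_1 x_1^2+\cdots+a_n x_n^2\leq 1/e\}$ and the components of $S_y\cap \{a_1 x_1^2+\cdots+a_n x_n^2\leq 1/e\}$ are in bijection with those of $S_y$ (this is the reduction to the bounded case already used in Section~\ref{sec:roadmap}). Choosing such an $e$ which also satisfies $e<e'$ yields that $S_y$ has exactly $b'$ connected components, for every $y\in U$, as required.

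The main obstacle is the interplay between the two limiting steps: the transfer step produces a single threshold $e'$ that depends only on $U$ and on the data $(F,G,a)$, whereas the ball-exhaustion bijection between the components of $S_y$ and those of its bounded truncation needs $e$ to be small in a way that a priori depends on $y$. The argument reconciles these by fixing $y$ only after $e'$ has been chosen, so that the $y$-dependent threshold can always be taken smaller than $e'$ without disturbing the uniform count $b'$ that the transfer principle has already delivered.
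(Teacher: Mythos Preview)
Your proposal is correct and follows essentially the same route as the paper: use the inclusion $\mathrm{ext}(U,\R\langle\varepsilon\rangle)\subset U_\varepsilon$ together with the trivialisation over $U_\varepsilon$ to get a uniform count $b'$ for the fibers of $S'_\varepsilon$, then invoke the transfer principle (as in \cite{BRSS}) to obtain a real threshold $e'$ below which the truncated fibers $S_y\cap\{a_1x_1^2+\cdots+a_nx_n^2\le 1/e\}$ all have $b'$ components. You are in fact more explicit than the paper on the last step---passing from the bounded truncation back to $S_y$ itself via the large-ball reduction of Section~\ref{sec:roadmap}---and on why the $y$-dependent choice of $e$ does not clash with the uniform threshold $e'$; the paper leaves both of these points implicit.
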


Finally, we can describe the subroutine {\sf Eliminate} whose correctness
follows from the previous lemma.
\begin{center}
\begin{algorithm}
\KwData{Finite sequences $F$ and $G$ in $\Q[\bm{x}, \bm{y}]$ with $\bm{x} =
  (x_1, \ldots, x_n)$ and $\bm{y} = (y_1, \ldots, y_t)$, defining a
  semi-algebraic set $S\subset \R^n \times \R^t$. \\
Assumes that $(F, G)$ satisfies assumption $({\sf A})$.}
\KwResult{$\Delta \in \Q[\bm{y}]$ such that the number of connected components
  of $S_y$ is invariant when $y$ ranges over a connected component of 
  $\R^t - \{\Delta = 0\}$}
\SetAlgoNoLine
{Choose $a_1>0, \ldots, a_n>0$ in $\Q$ randomly and let $g \gets a_1
  x_1^2+\cdots+a_n x_n^2 \leq \frac{1}{\varepsilon}$} \\
{$\Delta \gets {\sf EliminateProper}(F, G \cup g, \bm{x}, \bm{y})$}\\
{$\Delta \gets {\sf Normalize}(\Delta)$}
\Return $\Delta_0$.
\caption{{\sf Eliminate$(F, G, \bm{x}, \bm{y})$}}
\label{algo:elim}
\end{algorithm}
\end{center}
\section{Computations}\label{sec:experiments}
We have implemented several variants of the roadmap algorithms sketched in
Section~\ref{sec:roadmap} as well as variants of the algorithm {\sf
  ParametricSolve}. To perform algebraic
elimination, we use Gr\"obner bases implemented in the {\sc FGb} library by
J.-C. Faug\`ere \cite{FGb}. The roadmap algorithm and the routines for
computing sample points in semi-algebraic sets are implemented
in the {\sc RAGlib} library \cite{RAGlib}.

We have not directly applied the most general version of {\sf ParametricSolve}
to the polynomial $B$. Indeed, since its variables $v_2, v_3, v_4$ lie in the
Cartesian product $\mathbb{P}^1(\R)\times\mathbb{P}^1(\R)\times\mathbb{P}^1(\R)$
(which is compact), the projection on the parameter's space is proper and it
suffices to compute critical loci of that projection. There is one technical
(but easy) difficulty to overcome: polynomial $B$ actually admits a positive
dimensional singular locus. But an easy computation shows that this singular
locus has one purely complex component (which satisfies $v_4^2+1$) which can then
be forgotten. The other component has a projection on the paramaters'space which
Zariski closed (it is contained in the set satisfied by $a_2a_3=0$). This way,
we directly obtain the following polynomial for $\Delta$ by computing the
critical locus and consider additionally the set defined by $a_2a_3=0$. {\small\[
    {{a_2}}{{a_3}}{d_5}\, \left( {a_2}+{a_3}+{d_5} \right) \left( {a_2}+{a_3}-
      {d_5} \right)
  \]} Computing $\Delta$ as above does not take more than $3$ sec. on a standard
laptop using {\sc FGb}. Getting sample points in the set defined by 
$\Delta\neq 0$ is trivial. We obtain the following $10$ sample points using {\sc RAGlib}
{\tiny
  \begin{dmath*}
\{{a_2}=-1,{a_3}=-3,{d_5}=3\}, \{{a_2}=-1,{a_3}=-1,{d_5}=3\}, \{{a_2}=-1, {a_3}=2, {d_5}=3\}, \{{a_2}=-1,{a_3}=5,{d_5}=3\}, \{{a_2}=-1,{a_3}=\frac{1}{2},{d_5}=3\}, \{{a_2}=1,{a_3}=-120,{d_5}=118\}, \{{a_2}=1,{a_3}=-118,{d_5}=118\}, \{{a_2}=1,{a_3}=1,{d_5}=118\}, \{{a_2}=1,{a_3}=118,{d_5}=118\}, \{{a_2}=1,{a_3}=-1/2,{d_5}=118\}
  \end{dmath*}
}Our implementation allows us to compute a roadmap for one sample
point within $20$ minutes on a standard laptop. Analyzing the connectivity of
these roadmaps is longer as it takes $40$ min. All in all, approximately $10$
hours are required to handle this positive dimensional parametric system. The
data we computed are available at {\tt
  http://ecarp.lip6.fr/papers/materials/issac20/}. These computations allow to
retrieve the conclusions of our theoretical analysis of the UR family. They
illustrate that prototype implementations of our algorithms are becoming
efficient enough to tackle automated kinematic singularity analysis in robotics.


\paragraph*{Acknowledgments.}%
The three authors are supported by the joint ANR-FWF ANR-19-CE48-0015, FWF I
4452-N \textsc{ECARP} project. Mohab Safey El Din is supported by the ANR grants
ANR-18-CE33-0011 \textsc{Sesame} and ANR-19-CE40-0018 \textsc{De Rerum Natura},
the PGMO grant \textsc{CAMiSAdo} and the European Union's Horizon 2020 research
and innovation programme under the Marie Sklodowska-Curie grant agreement N.
813211 (POEMA).


\begin{thebibliography}{10}

\bibitem{angeles}
{\sc Angeles, J.}
\newblock {\em Fundamentals of Robotic Mechanical Systems, Theory, Methods, and
  Algorithms}.
\newblock Springer, 2007.

\bibitem{ALM99}
{\sc Aubry, P., Lazard, D., and Maza, M.~M.}
\newblock On the theories of triangular sets.
\newblock {\em Journal of Symbolic Computation 28}, 1-2 (1999), 105--124.

\bibitem{BGHM}
{\sc Bank, B., Giusti, M., Heintz, J., and Mbakop, G.}
\newblock Polar varieties and efficient real equation eolving: the hypersurface
  case.
\newblock {\em J. of Complexity 13}, 1 (1997), 5--27.

\bibitem{BGHSS}
{\sc Bank, B., Giusti, M., Heintz, J., {Safey El Din}, M., and Schost, E.}
\newblock On the geometry of polar varieties.
\newblock {\em Applicable Algebra in Engineering, Communication and Computing
  21}, 1 (2010), 33--83.

\bibitem{Basu97}
{\sc Basu, S., Pollack, R., and Roy, M.-F.}
\newblock On computing a set of points meeting every cell defined by a family
  of polynomials on a variety.
\newblock {\em Journal of Complexity 13}, 1 (1997), 28 -- 37.

\bibitem{BPR00}
{\sc Basu, S., Pollack, R., and Roy, M.-F.}
\newblock Computing roadmaps of semi-algebraic sets on a variety.
\newblock {\em Journal of the American Mathematical Society 13}, 1 (2000),
  55--82.

\bibitem{BPR}
{\sc Basu, S., Pollack, R., and Roy, M.-F.}
\newblock {\em Algorithms in real algebraic geometry}.
\newblock Springer-Verlag, 2003.

\bibitem{BR}
{\sc Basu, S., and Roy, M.-F.}
\newblock Divide and conquer roadmap for algebraic sets.
\newblock {\em Discrete \& Computational Geometry 52}, 2 (2014), 278--343.

\bibitem{BRSS}
{\sc Basu, S., Roy, M.-F., {Safey El Din}, M., and Schost, {\'E}.}
\newblock A baby step--giant step roadmap algorithm for general algebraic sets.
\newblock {\em Foundations of Computational Mathematics 14}, 6 (2014),
  1117--1172.

\bibitem{bcr}
{\sc Bochnak, J., Coste, M., and Roy, M.-F.}
\newblock {\em Real Algebraic Geometry}.
\newblock Springer-Verlag, 1998.

\bibitem{Canny88}
{\sc Canny, J.}
\newblock {\em The complexity of robot motion planning}.
\newblock MIT press, 1988.

\bibitem{CannySAS}
{\sc Canny, J.}
\newblock Computing roadmaps of general semi-algebraic sets.
\newblock {\em The Computer Journal 36}, 5 (1993), 504--514.

\bibitem{CS95}
{\sc Coste, M., and Shiota, M.}
\newblock Thom's first isotopy lemma: a semialgebraic version, with uniform
  bound.
\newblock In {\em Real Analytic and Algebraic Geometry: Proc. of the
  International Conference, Trento\/} (1995), Walter de Gruyter, p.~83.

\bibitem{Eisenbud}
{\sc Eisenbud, D.}
\newblock {\em Commutative Algebra: with a view toward algebraic geometry},
  vol.~150.
\newblock Springer Science \& Business Media, 2013.

\bibitem{F4}
{\sc Faug{\`e}re, J.-C.}
\newblock A new efficient algorithm for computing gr{\"o}bner bases (f4).
\newblock {\em Journal of pure and applied algebra 139}, 1-3 (1999), 61--88.

\bibitem{F5}
{\sc Faug{\`e}re, J.~C.}
\newblock A new efficient algorithm for computing gr{\"o}bner bases without
  reduction to zero (f 5).
\newblock In {\em Proc. of the 2002 international symposium on Symbolic and
  algebraic computation\/} (2002), ACM, pp.~75--83.

\bibitem{FGb}
{\sc Faug\`ere, J.-C.}
\newblock Fgb: A library for computing gr\"obner bases.
\newblock In {\em Mathematical Software - ICMS 2010\/} (Berlin, Heidelberg,
  September 2010), K.~Fukuda, J.~Hoeven, M.~Joswig, and N.~Takayama, Eds.,
  vol.~6327 of {\em Lecture Notes in Computer Science}, Springer, pp.~84--87.

\bibitem{GiHeMoMoPa98}
{\sc Giusti, M., Heintz, J., Morais, J.-E., Morgenstern, J., and Pardo, L.-M.}
\newblock Straight-line programs in geometric elimination theory.
\newblock {\em Journal of Pure and Applied Algebra 124\/} (1998), 101--146.

\bibitem{GiHeMoPa95}
{\sc Giusti, M., Heintz, J., Morais, J.-E., and Pardo, L.-M.}
\newblock When polynomial equation systems can be solved fast?
\newblock In {\em AAECC-11\/} (1995), vol.~948 of {\em LNCS}, Springer,
  pp.~205--231.

\bibitem{GeoReso}
{\sc Giusti, M., Lecerf, G., and Salvy, B.}
\newblock A gr{\"o}bner free alternative for polynomial system solving.
\newblock {\em Journal of Complexity 17}, 1 (2001), 154 -- 211.

\bibitem{GR}
{\sc Gournay, L., and Risler, J.-J.}
\newblock Construction of roadmaps in semi-algebraic sets.
\newblock {\em Applicable Algebra in Engineering, Communication and Computing
  4}, 4 (1993), 239--252.

\bibitem{MS04}
{\sc Mezzarobba, M., and {Safey El Din}, M.}
\newblock Computing roadmaps in smooth real algebraic sets.
\newblock In {\em Proc. of Transgressive Computing\/} (2006), J.-G. Dumas, Ed.,
  pp.~327--338.

\bibitem{mls}
{\sc Murray, R., Li, Z., and Sastry, S.}
\newblock {\em A Mathematical Introduction to Robotic Manipulation}.
\newblock CRC Press Taylor \& Francis Group, 1994.

\bibitem{S05}
{\sc {Safey El Din}, M.}
\newblock Finding sampling points on real hypersurfaces is easier in singular
  situations.
\newblock {\em MEGA (Effective Methods in Algebraic Geometry) Electronic
  proceedings\/} (2005).

\bibitem{RAGlib}
{\sc {Safey El Din}, M.}
\newblock Real algebraic geometry library.
\newblock available at {\tt http://www-polsys.lip6.fr/\~{ }safey}, 2007.

\bibitem{SaSc03}
{\sc {Safey El Din}, M., and Schost, E.}
\newblock Polar varieties and computation of one point in each connected
  component of a smooth real algebraic set.
\newblock In {\em Proc. of the 2003 Int. Symp. on Symb. and Alg. Comp.\/} (NY,
  USA, 2003), ISSAC'03, ACM, pp.~224--231.

\bibitem{SaSc11}
{\sc Safey El~Din, M., and Schost, E.}
\newblock A baby steps/giant steps probabilistic algorithm for computing
  roadmaps in smooth bounded real hypersurface.
\newblock {\em Disc. Comput. Geom. 45}, 1 (2011), 181--220.

\bibitem{SaSc17}
{\sc {Safey El Din}, M., and Schost, {\'E}.}
\newblock A nearly optimal algorithm for deciding connectivity queries in
  smooth and bounded real algebraic sets.
\newblock {\em Journal of the ACM (JACM) 63}, 6 (2017), 48.

\bibitem{ScSh}
{\sc Schwartz, J.~T., and Sharir, M.}
\newblock Algorithmic motion planning in robotics.
\newblock In {\em Algorithms and Complexity}. Elsevier, 1990, pp.~391--430.

\bibitem{selig}
{\sc Selig, J.}
\newblock {\em Geometric Fundamentals of Robotics}.
\newblock Monographs in Computer Science. Springer, 2005.

\bibitem{spong}
{\sc Spong, M., Hutchinson, S., and Vidyasagar, M.}
\newblock {\em Robot Dynamics and Control}, 2nd~ed.
\newblock Monographs in Computer Science. John Wiley \& Sons, 2005.

\bibitem{Wang01}
{\sc Wang, D.}
\newblock {\em Elimination methods}.
\newblock Springer Science \& Business Media, 2001.

\bibitem{wenger}
{\sc Wenger, P.}
\newblock Cuspidal robots.
\newblock In {\em Singular Configurations of Mechanisms and Manipulators},
  Z.~D. M{\"u}ller~A., Ed. Springer International Publishing, 2019,
  pp.~67--100.

\bibitem{husty}
{\sc Weyrer, M., Brandst{\"o}tter, M., and Husty, M.}
\newblock Singularity avoidance control of a non-holonomic mobile manipulator
  for intuitive hand guidance.
\newblock {\em Robotics 8}, 1 (2019).

\end{thebibliography}
\end{document}